\newtheorem{theorem}{Theorem}
\newtheorem{lemma}{Lemma}
\theoremstyle{definition}
\newcommand{\cmt}[1]{}
\title{Optimal Pricing Schemes for Identical Items with Time-Sensitive Buyers}
\author{
  Zhengyang Liu \\
  Beijing Institute of Technology\\
  \texttt{zhengyang@bit.edu.cn} \\
   \And
  Liang Shan\\
  Renmin University of China\\
 \texttt{shanliang@ruc.edu.cn}
 \And
  Zihe Wang \thanks{Zihe Wang is the corresponding author}\\
  Renmin University of China\\
 \texttt{wang.zihe@ruc.edu.cn}
}
\begin{document}
\maketitle

\begin{abstract}
Time or money? That is a question! In this paper, we consider this dilemma in the pricing regime, in which we try to find the optimal pricing scheme for identical items with heterogenous time-sensitive buyers. We characterize the revenue-optimal solution and propose an efficient algorithm to find it in a Bayesian setting. Our results also demonstrate the tight ratio between the value of wasted time and the seller's revenue, as well as that of two common-used pricing schemes, the $k$-step function and the fixed pricing. To explore the nature of the optimal scheme in the general setting, we present the closed forms over the product distribution and show by examples that positive correlation between the valuation of the item and the cost per unit time could help increase revenue. To the best of our knowledge, it is the first step towards understanding the impact of the time factor as a part of the buyer cost in pricing problems, in the computational view.
\end{abstract}

\section{Introduction}
The Time~vs.~Money dilemma exists for everyone in our life. When buying durable, more or less expensive, one may wait for the discount activities or choose the lowest price by visiting various stores. However, the valuation of time differs, the higher buyer's valuation of time is, the less time she wants to waste on the deal. Another scenario in the real world is the online in-app purchase. In the App Store or Google Play, many applications charge in their apps to promote the efficiency of the service or the user experience. They often prepare other ways for users to implement similar functionalities or finish the same jobs by putting more effort. For example, a platform could set different limits on the speed of data transmission.  Once we can distinguish the CPUTs (Cost Per Unit Time) of various agents, one can design more reasonable pricing schemes, that is ``lose the money, or lose the time''.

From the sellers' perspective, they can leverage the buyers' sensitiveness for time or money to gain more revenue. We illustrate the ``Double 11'', the biggest online shopping festival in the world, as an example. The shopping festival has already become the world's largest shopping festival since 2015~\cite{Chen:2016tz}. During the shopping period, the online shopping platform tends to keep customers spending as long time as possible by designing complicated schemes requiring customers to spend a lot of effort to get the discount vouchers~\cite{Chen:2021va}. The case makes the customer trapped by the time vs. money dilemma. They either devote time effort following the rules of the platform to get the discount voucher or take the item directly, without any discount.

In this paper, we try to understand the customers' trade-off between time and money in the view of sellers. We consider the optimal dynamic pricing scheme for heterogeneous unit-demand customers.
In our setting, a single seller tries to sell identical items with unlimited supply to unit-demand buyers. To attract more potential customers and attain more revenue, the seller designs many complicated promotions. The rules of these promotions are not easy to understand or you need wait for the starting time of the promotions, so one needs spending time on the vouchers. For convenience, we assume that if a buyer spends time $t$ on her purchase, the price should be deterministic, say $p(t)$, where $p:\mathbb{R}\to\mathbb{R}$ is the pricing function set by the seller. The objective of the seller is to come up with a pricing scheme seeking to maximize the revenue.
\cmt{
The model can be simplified as the seller designs a promotion function $p(t)$ such that if you spend $t$ time in the participate the promotion activity, you will get the item at price $p(t)$.
}
Each buyer has her own CPUT (Cost Per Unit Time), denoted by $\theta$, and the valuation to the item, denoted by $v$.
A buyer tends to buy the item if the condition $\min_t \{\theta t+p(t)\}\le v$ is satisfied. The seller's revenue is $p(t)$, where $t$ is the time spent by the buyer.



In this paper, we introduce the pricing scheme problem with time-sensitive buyers and make a few contributions: 
\begin{itemize}
    \item We formulate the pricing scheme problem in our setting as an optimization problem in a Bayesian way. We then propose a polynomial algorithm to find the optimal scheme in the discrete distribution setting. To handle the continuous distribution setting, we can discretize the distributions and reduce to the problem with the discrete distribution. We give an upper bound of the revenue loss caused by this discretization. We also 
    show that the calculus method is not applicable if the continuous distribution is not good enough.
    
    \item We consider the revenue change and measure the value of wasted time in our pricing scheme. We show a tight ratio between the performance of the optimal $k$-step pricing scheme and the fixed pricing scheme in terms of the revenue. We show the value of wasted time could far outweigh the seller's revenue and provide a tight ratio between the time loss and the optimal revenue.
    
    \item We investigate a special case to gain more intuitions. \cmt{We prove that the well-known posted-pricing scheme (e.g., the price tag in the supermarket) is optimal when the valuation and the CPUT are independently distributed.} Optimal dynamic pricing schemes is more likely to show advantages when the valuation and CPUT are positive correlated.
    
\end{itemize}
\subsection{Related Works}
\cite{Coase:1972we} initiates the dynamic pricing with intertemporal demand and strategic buyers. \cite{Coase:1972we}  conjectures that the price converges to the cost when selling the durable goods to the strategic buyers if not committing to a posted-price. Later, many works~\cite{Su:2007wu,Araman:2009ty,Cachon:2011tw,Yu:2016vj,Golrezaei:2020td} demonstrate that dynamic pricing can gain more revenue for the seller when we have no idea of the total demand.
The most related work is \cite{yang2021costly}. He considers a general model where a principal screens an agent with multidimensional private information: a one-dimensional productive component and a multidimensional costly component. In our model, a buyer’s value for an item corresponds to the productive component and a buyer’s cost per unit time (CPUT) corresponds to the costly component. We consider a special case of Yang’s model from a more computational perspective. His result implies that if the productive component (value for the item) and the costly component (CPUT) have positive correlation (negative correlation in our model), then there exists an optimal mechanism that involves no costly screening (posted pricing in our model). We complement his result by solving a positive correlation case. 
\paragraph{Dynamic Scheme Design.}
One related line of research to our paper is using the dynamic approach to design the pricing scheme~\cite{Gallien:2006vg,Bergemann:2010ul,Akan:2015wh,Golrezaei:2017wq,Kakade:2013ud,Pavan:2014ti}. A seminal paper by~\cite{Gallien:2006vg} considers an infinite horizon and discounted sets and designs the first tractable pricing policy for strategic buyers. \cite{Board:2016uj} discuss the same problem with discrete and finite time horizons. Both works assume that buyers' valuations could be discounted with some
parameters geometrically, and the discounted knowledge is even public to the seller. We consider the same finite horizon setting  as~\cite{Pai:2013tf,Board:2016uj}, which is more relevant to the applications in daily life. In addition, we have no restriction to the discounted value of buyers, except that they are non-increasing over time. One of the most related papers to our work is that by~\cite{Chen:2018tx}, who propose a $0.29$-approximation revenue optimal pricing algorithm in the sense that valuations of the customers decay with different rates, and the customers arrive with a Poisson process.


All the works in this research line consider a different buyer model and assume the demands and supplies change over time. In contrast, buyers in our paper are unit-demand, and the supply is unlimited. We face the issue of wasted time while they do not.

\paragraph{Bayesian Revenue Maximization.}
Since the celebrated work due by~\cite{Myerson:1981wi}, the revenue
both computer science and economic communities has extensively studied maximization in Bayesian setting~\cite{Bei:2011uw,Cai:2013vt,Haghpanah:2015vq,Bei:2017ty,Cai:2019vq,Zihe:2022,tang2016optimal,wang2015optimal}.
One of the prominent factors that contrast our paper with the aforementioned research is introducing the time~vs.~money dilemma in the views of buyers. All related papers in the literature assume that buyers are ideally rational, meaning that they can come up with the best way to maximize their utility without losing time.
\cmt{To the best of our knowledge, it is the first time to consider the time factor as a part of the cost in this literature.
The hardness results of variant pricing scheme are extensively studied in TCS community. We conjecture that finding an optimal pricing scheme in our work is also intractable.}

\section{Preliminaries} 
In our model, a seller sells identical and indivisible items with unlimited supply to unit-demand buyers, who needs at most one item. Each buyer has a type, consists of two parameters $(\theta,v)$, where $\theta$ represents her CPUT (Cost Per Unit Time), and $v$ is her valuation to the item. We assume that buyers' types are generated by a joint probability distribution $F$ over $\theta$ and $v$. For convenience, we also denote by $f$ the probability density function of $F$. The goal of the problem is to maximize the seller's expected revenue with respect to the joint distribution $F$.

To maximize the revenue, the seller can design a publicly known pricing function $p:\mathbb{R}\to \mathbb{R}$, which only depends on the time the buyer spends on the deal. That is, any buyer needs to pay the same price $p(t)$ if she spends time $t\ge0$. W.L.O.G., we assume that the price charged by the seller cannot increase if one spends more time on the deal, i.e., the function $p(t)$ is non-increasing in terms of the time $t$. A buyer's action is {\em to take the item at the right time or leave it}. According to the pricing function, each buyer makes a trade-off between the time she spends and the price she pays. We assume each buyer is strategic. Therefore she tries to get more utility by spending the right amount of time in promotions. We also assume the buyer is individually rational (IR) such that she always gets a non-negative utility. Hence the buyer with type $(\theta,v)$ will buy the item if and only if $\min_t \{p(t)+\theta t\}\le v$, that is, the total cost including money and time is at most the value to an item in the view of the buyer.
Once the buyer with type $(\theta,v)$ buys the item, the set of best actions of the buyer is 
$T^*(\theta)=\arg\min_t \{p(t)+\theta t\}$ 
which only depends on CPUT $\theta$. 
Let $t^*(\theta)=\min\left\{t\mid t\in T^*(\theta)\right\}$ denote the best action of the buyer, that is, when multiple best actions exist, we assume the buyer will purchase the item with the least time on the deal. 
We can formulate such a problem for the seller as follows,
\begin{align}
\max_{p(\cdot)} & \int\limits_{(\theta,v)\sim F} p(t^{*}(\theta)) \mathbb{I}\left\{ \theta
  t^{*}(\theta)+p(t^{*}(\theta))\le v \right\} f(\theta,v) 
  \mathrm{d}\theta \mathrm{d} v,\nonumber\\
 \text{s.t.} & ~~ t^*(\theta)=\min\left\{t\mid t\in T^*(\theta)\right\},\label{eq:main}
\end{align}
where $\mathbb{I}\{\cdot\}$ is an indicator function.

\cmt{
  There is one seller selling identical items with unlimited supply to unit-demand buyers. The seller designs a promotion function $p(t)$ such that if you spend $t$ time in the participate the promotion activity, you will get the item at price $p(t)$. We assume a buyer has a cost
Each buyer has a per-unit valuation for the participation time represented by $\theta_1$ and has a valuation for the item represented by $v$.
A buyer will buy the item if the condition $\min_t \theta_1t+p(t)\le v$ is satisfied. The seller's revenue is $p(t)$ where $t$ is the time spent by the buyer.

In the procedure of promotion activity, the buyer leaks a signal which indicates about her valuation information. Then the seller make use of this information in a non-discriminative way.

We denote the distribution of $(\theta_1,v)$ by $F$. The seller's objective is to maximize the total revenue:
\begin{equation}
    \max_p \int_{(\theta_1,v)\sim F}p(t^*)\cdot
    \mathbb{I}\left\{\theta_1t^*+p(t^*)\le v\right\}f(\theta_1,v)\mathrm{d}\theta_1 dv,
    \label{main}
\end{equation}
}

\section{Characterizations of the Optimal Pricing Scheme}
Before designing our solution for the problem, we first characterize several properties for the pricing function. The first one is the convexity of the region where buyers cannot afford no matter how much time they spend.
\begin{lemma}
Given the pricing function $p:\mathbb{R}\to\mathbb{R}$, the set of types $\left\{(\theta,v)\mid \min_t \{p(t)+\theta t\}> v\right\}$ is convex.
\label{lemma_convex}
\end{lemma}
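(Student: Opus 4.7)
The plan is to recognize the set as the (strict) hypograph of a concave function and then invoke the standard fact that hypographs of concave functions are convex.

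First I would introduce the auxiliary function $g(\theta) := \min_t \{p(t) + \theta t\}$, so that the set in question can be rewritten as
\begin{equation*}
S \;=\; \{(\theta, v) \mid g(\theta) > v\}.
\end{equation*}
For each fixed value of $t$, the map $\theta \mapsto p(t) + \theta t$ is affine in $\theta$. Taking a pointwise infimum over an arbitrary family of affine functions yields a concave function, so $g:\mathbb{R}\to\mathbb{R}\cup\{-\infty\}$ is concave in $\theta$.

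Next I would verify directly that the strict hypograph $\{(\theta,v) : v < g(\theta)\}$ of any concave $g$ is convex. Take two points $(\theta_1,v_1), (\theta_2,v_2)\in S$ and $\lambda\in[0,1]$. Then by concavity of $g$ and strict inequality on each coordinate,
\begin{equation*}
\lambda v_1 + (1-\lambda) v_2 \;<\; \lambda g(\theta_1) + (1-\lambda) g(\theta_2) \;\le\; g(\lambda\theta_1 + (1-\lambda)\theta_2),
\end{equation*}
so $(\lambda\theta_1 + (1-\lambda)\theta_2,\, \lambda v_1 + (1-\lambda)v_2)\in S$, proving convexity.

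There is essentially no obstacle here: the only subtlety is whether the infimum defining $g(\theta)$ is attained and whether $g$ is real-valued (it could be $-\infty$ if $p(t) + \theta t$ is unbounded below in $t$, but since $p$ is non-increasing and $t\ge 0$ the infimum is at least $\inf_t p(t)$ when $\theta\ge 0$, the relevant regime). I would briefly note that the argument for concavity and for convexity of the strict hypograph goes through regardless of attainment, since only the inequality $\min_t\{p(t) + \theta t\} \ge \lambda g(\theta_1) + (1-\lambda) g(\theta_2)$ at the midpoint is needed, which follows by choosing near-optimal $t$'s at $\theta_1$ and $\theta_2$ and plugging into the affine functions at the convex combination.
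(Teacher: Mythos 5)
Your proof is correct and is essentially the paper's argument in different clothing: the paper writes the set as the intersection $\bigcap_{t\ge 0}\{(\theta,v)\mid p(t)+\theta t>v\}$ of half-planes and invokes closure of convexity under intersection, while you take the pointwise infimum of the same family of affine functions first and observe that the set is the strict hypograph of the resulting concave function. The two formulations are equivalent (the intersection of the strict hypographs of the affine maps is the strict hypograph of their infimum), and your added remarks on attainment and possible $-\infty$ values are harmless refinements rather than a different route.
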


\begin{proof}
Given the time $t$ spent on the deal, we denote by $A(t)$ the set of types where the buyer spends time $t$ but receive the negative utility, formally,
\begin{equation*}
    A(t)=\{(\theta,v)\mid p(t)+\theta t>v\}.
\end{equation*}
Note that a buyer would not like to buy the item if the utility is always negative, no matter how much time she spends on the deal. Therefore, the buyer gets no item if and only if her type belongs to the intersection set $\bigcap_{t\ge 0} A(t)$. One can verify that the set $A(t)$ is the half-plane, as shown in Figure~\ref{fig:1.1}. By the fact that the intersection of convex sets is also convex, we know that $\bigcap_{t\ge 0}A(t)$ is also convex.
\end{proof}

\begin{figure}
    \centering
    \begin{subfigure}[t]{4cm}
        \centering
\tikzset{every picture/.style={line width=0.75pt}} 
\resizebox{4cm}{3cm}{%
\begin{tikzpicture}[x=0.75pt,y=0.75pt,yscale=-1,xscale=1]

\draw  (98.5,270.25) -- (308,270.25)(120.5,105) -- (120.5,293.25) (301,265.25) -- (308,270.25) -- (301,275.25) (115.5,112) -- (120.5,105) -- (125.5,112)  ;
\draw    (97.5,160.75) -- (319.5,160.75) ;
\draw    (102,247.75) -- (269.5,110.75) ;
\draw    (99.5,220.75) -- (344.44,117.28) ;
\draw  [draw opacity=0][fill={rgb, 255:red, 235; green, 235; blue, 235 }  ,fill opacity=1 ] (241.58,161.29) -- (173.25,190.13) -- (120.71,233.07) -- (120.92,269.63) -- (290,269.71) -- (290,161.21) -- cycle ;
\draw [line width=0.75]    (261.6,117.6) -- (269.71,127.27) ;
\draw [shift={(271,128.8)}, rotate = 229.99] [color={rgb, 255:red, 0; green, 0; blue, 0 }  ][line width=0.75]    (4.37,-1.32) .. controls (2.78,-0.56) and (1.32,-0.12) .. (0,0) .. controls (1.32,0.12) and (2.78,0.56) .. (4.37,1.32)   ;
\draw    (318,128) -- (325.96,141.04) ;
\draw [shift={(327,142.75)}, rotate = 238.61] [color={rgb, 255:red, 0; green, 0; blue, 0 }  ][line width=0.75]    (4.37,-1.32) .. controls (2.78,-0.56) and (1.32,-0.12) .. (0,0) .. controls (1.32,0.12) and (2.78,0.56) .. (4.37,1.32)   ;
\draw    (309.5,160) -- (309.5,174.75) ;
\draw [shift={(309.5,176.75)}, rotate = 270] [color={rgb, 255:red, 0; green, 0; blue, 0 }  ][line width=0.75]    (4.37,-1.32) .. controls (2.78,-0.56) and (1.32,-0.12) .. (0,0) .. controls (1.32,0.12) and (2.78,0.56) .. (4.37,1.32)   ;

\draw (104.4,114) node [anchor=north west][inner sep=0.75pt]    {$v$};
\draw (272.4,106.6) node [anchor=north west][inner sep=0.75pt]  [font=\footnotesize]  {$A( 10)$};
\draw (331.4,127.6) node [anchor=north west][inner sep=0.75pt]  [font=\footnotesize]  {$A( 5)$};
\draw (316.4,163.6) node [anchor=north west][inner sep=0.75pt]  [font=\footnotesize]  {$A( 0)$};
\draw (285,273.4) node [anchor=north west][inner sep=0.75pt]    {$\theta $};
\draw (183,204.9) node [anchor=north west][inner sep=0.75pt]    {$\bigcap _{t\geq 0} A( t)$};

\end{tikzpicture}
}
        \caption{the concave and weakly increasing separation function}
        \label{fig:1.1}
    \end{subfigure}
    \hspace{1in}
    \begin{subfigure}[t]{4cm}
        \centering
        
\resizebox{4cm}{3cm}{%

\tikzset{every picture/.style={line width=0.75pt}} 

\begin{tikzpicture}[x=0.75pt,y=0.75pt,yscale=-1,xscale=1]

\draw  (98.5,270.25) -- (308,270.25)(120.5,105) -- (120.5,293.25) (301,265.25) -- (308,270.25) -- (301,275.25) (115.5,112) -- (120.5,105) -- (125.5,112)  ;
\draw    (120.56,220.19) .. controls (155.06,155.69) and (243.8,131.2) .. (269.89,130.22) ;
\draw    (101.4,203.87) -- (272.2,110.62) ;
\draw   (119.2,193.5) .. controls (119.2,193) and (119.6,192.6) .. (120.1,192.6) .. controls (120.6,192.6) and (121,193) .. (121,193.5) .. controls (121,194) and (120.6,194.4) .. (120.1,194.4) .. controls (119.6,194.4) and (119.2,194) .. (119.2,193.5) -- cycle ;
\draw   (186.67,156.57) .. controls (186.67,156.07) and (187.07,155.67) .. (187.57,155.67) .. controls (188.06,155.67) and (188.47,156.07) .. (188.47,156.57) .. controls (188.47,157.06) and (188.06,157.47) .. (187.57,157.47) .. controls (187.07,157.47) and (186.67,157.06) .. (186.67,156.57) -- cycle ;

\draw (104.4,114) node [anchor=north west][inner sep=0.75pt]    {$v$};
\draw (285,273.4) node [anchor=north west][inner sep=0.75pt]    {$\theta $};
\draw (69.9,175.7) node [anchor=north west][inner sep=0.75pt]  [font=\footnotesize]  {$p\left( t^{*}( \theta )\right)$};
\draw (132.57,129.87) node [anchor=north west][inner sep=0.75pt]  [font=\footnotesize]  {$\ell '( \theta ) =t^{*}( \theta )$};
\draw (262.07,131.87) node [anchor=north west][inner sep=0.75pt]  [font=\footnotesize]  {$\ell ( \theta )$};

\end{tikzpicture}

}
        \caption{the pricing function and the separation function}
        \label{fig:1.2}
    \end{subfigure}
    \caption{An illustration for the separation function}
    \label{fig:1}
\end{figure}
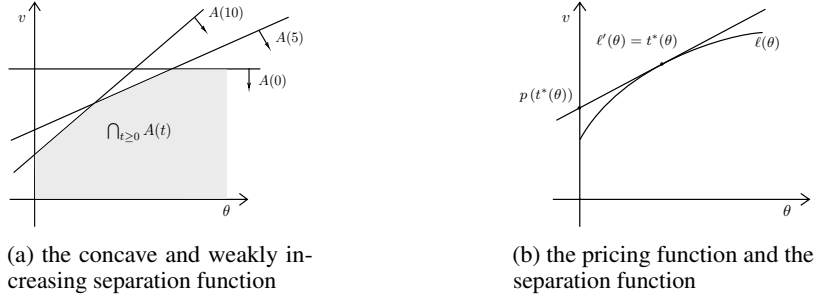

Given a pricing function $p(\cdot)$, we have such a convex set as mentioned above.
We call the upper boundary of the convex set as the {\em separation function}, denoted by $\ell_p:\mathbb{R}\to\mathbb{R}$. According to Lemma~\ref{lemma_convex},
the function $\ell_p(\cdot)$ is weakly increasing and concave. Therefore, it is differentiable at all but few points. In other words, the derivative of $\ell(\cdot)$ is well defined on almost every point.
The following lemma demonstrates that for each type $(\theta, \ell_p(\theta))$ on this function, the corresponding buyer will spend time $t$ which exactly equals to its derivative of the function at $\theta$.
\begin{lemma}
\label{lemma_time}
Given the pricing function $p$, if $t^*(\cdot)=\min\{t\mid t\in T^*(\cdot)\}$ is continuous at $\theta$, we have
$\ell_p'(\theta)=t^*(\theta)$. 
\end{lemma}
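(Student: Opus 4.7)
The plan is to recognize $\ell_p$ as the pointwise infimum of a family of lines and then invoke a one-line envelope argument. From the definition of the convex region in Lemma~\ref{lemma_convex}, the boundary satisfies
\[
\ell_p(\theta)=\min_t \{p(t)+\theta t\}=p(t^*(\theta))+\theta\,t^*(\theta),
\]
so for each fixed $t$ the map $\theta\mapsto p(t)+\theta t$ is an affine function of $\theta$ with slope $t$, and $\ell_p$ is their lower envelope. Heuristically this already gives the derivative, because the slope of the lower envelope at a smooth point is the slope of the supporting line, i.e.\ the minimizing $t$.

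To make this precise I would first establish the two one-sided inequalities by plugging in feasible choices. For any $\epsilon$, using $t^*(\theta)$ as a (suboptimal) choice at $\theta+\epsilon$ gives
\[
\ell_p(\theta+\epsilon)\le p(t^*(\theta))+(\theta+\epsilon)t^*(\theta)=\ell_p(\theta)+\epsilon\,t^*(\theta),
\]
and using $t^*(\theta+\epsilon)$ as a choice at $\theta$ gives
\[
\ell_p(\theta)\le p(t^*(\theta+\epsilon))+\theta\,t^*(\theta+\epsilon)=\ell_p(\theta+\epsilon)-\epsilon\,t^*(\theta+\epsilon).
\]
Rearranging these two estimates yields the sandwich
\[
\epsilon\,t^*(\theta+\epsilon)\le \ell_p(\theta+\epsilon)-\ell_p(\theta)\le \epsilon\,t^*(\theta).
\]

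Next I would divide by $\epsilon$, taking care with sign: for $\epsilon>0$ the difference quotient is squeezed between $t^*(\theta+\epsilon)$ and $t^*(\theta)$, while for $\epsilon<0$ the direction of the inequalities flips and the quotient is squeezed between $t^*(\theta)$ and $t^*(\theta+\epsilon)$. In either case the hypothesis that $t^*$ is continuous at $\theta$ forces both ends to converge to $t^*(\theta)$ as $\epsilon\to 0$, so the two-sided limit of the difference quotient exists and equals $t^*(\theta)$, proving $\ell_p'(\theta)=t^*(\theta)$.

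The only subtle point, and the place where continuity is actually used, is the lower bound: the minimizer set $T^*(\theta+\epsilon)$ can jump when $\theta$ crosses a kink of $\ell_p$, and without continuity one only gets $\ell_p$ having left and right derivatives equal to the endpoints of $T^*(\theta)$ (the subdifferential). I expect no calculation beyond the two displays above; the rest is just interpreting the squeeze and noting that concavity of $\ell_p$ (already given by Lemma~\ref{lemma_convex}) guarantees the one-sided derivatives exist everywhere, so continuity of $t^*$ promotes them to a true derivative.
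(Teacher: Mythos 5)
Your proof is correct and follows essentially the same route as the paper: both derive the sandwich $\epsilon\,t^*(\theta+\epsilon)\le \ell_p(\theta+\epsilon)-\ell_p(\theta)\le \epsilon\,t^*(\theta)$ by substituting each point's minimizer as a suboptimal choice at the other point, then divide and use continuity of $t^*$. Your treatment is in fact slightly more careful than the paper's, which only writes out the case $\delta>0$; your handling of negative increments and the remark about the subdifferential at kinks are welcome refinements of the same argument.
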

\begin{proof}
According to the definition of separation function $\ell_p$ and the fact that the pricing function is non-increasing, given a sufficiently small quantity $\delta>0$, we have
\begin{align*} 
    \ell_p(\theta)&=\theta t^*(\theta)+p(t^*(\theta)),\\
    \ell_p(\theta+\delta) &\le (\theta+\delta) t^*(\theta)+p(t^*(\theta)),
\end{align*}
    and that is
\begin{equation}\label{eq:e1}
    \ell_p(\theta+\delta)-\ell_p(\theta)\le \delta  t^*(\theta).    
\end{equation} 
Similarly, we have the followings,
\begin{align*}
    \ell_p(\theta)&\le \theta t^*(\theta+\delta)+p(t^*(\theta+\delta)),\\
    \ell_p(\theta+\delta)&=(\theta+\delta) t^*(\theta+\delta)+p(t^*(\theta+\delta )),
\end{align*}
    and that is
\begin{equation}\label{eq:e2}
    \ell_p(\theta+\delta)-\ell_p(\theta)\ge \delta  t^*(\theta+\delta).
\end{equation} 
By combining Equations~\eqref{eq:e1} and~\eqref{eq:e2}, we can bound $\ell_p(\theta+\delta)-\ell_p(\theta)$ from below and above:
\begin{align*}
    \delta t^*(\theta+\delta)\le& \ell_p(\theta+\delta)-\ell_p(\theta)\le \delta  t^*(\theta).
\end{align*}
Dividing the above inequality by $\delta$ and taking the limit as $\delta$ approaches zero yields the constraint
\begin{equation*}
    \ell_p'(\theta)=t^*(\theta).
    \qedhere
\end{equation*}
\end{proof}

The payment by the buyer with type $\theta$ is $p(t^*(\theta))$, shown in Figure~\ref{fig:1.2}. Note that for a buyer with the type $(\theta,\ell_p(\theta))$ which is on the separation function, her utility is zero, that is
\begin{align}
   p(t^*(\theta))=\ell_p(\theta)-\theta t^*(\theta) = \ell_p(\theta) - \theta\ell_p'(\theta). \label{infer_p}
\end{align}

Lemma~\ref{lemma_convex} states that, given a pricing function $p(\cdot)$, our separation function $\ell_p(\cdot)$ is determined, concave and weakly increasing. In reverse, given a concave and weakly increasing separation function $\ell_p(\cdot)$, we can also infer the function $p(\cdot)$. We define $H$ as the set of derivatives $\ell_p'(\theta)$, i.e., $H=\{\ell_p'(\theta)\mid \forall \theta\in\mathbb{R}\}$. We define $p(\ell_p'(\theta))=\ell_p(\theta)-\theta\ell_p'(\theta)$ for all $\theta$ and thus the value of function $p$ is defined on points in $H$. Even though there might exist multiple $\theta$'s with the same value of $\ell_p(\theta)$, at that time, the values of $\ell_p(\theta)-\theta\ell_p'(\theta)$ are identical. As above, the function $p$ is well defined. Furthermore, it is also possible that different pricing functions might result in the same function $\ell_p$. Given the separation line function $\ell_p(\cdot)$, the differences in these preimages are the values on points excluding the set $H$. One of the preimages could be $p(t)=p\left(\max\{s\mid s\le t \text{ and } s\in H\}\right)$.

As discussed above, we could design the separation function $\ell_p$, instead of designing the pricing function $p$ directly. In Equation~\eqref{eq:main}, the integral of $v$ can be simplified as follows,
\begin{eqnarray*}
\int_{v} \mathbb{I}\left\{ \theta t^{*}(\theta)+p(t^{*}(\theta))\le v  \right\}f(\theta,v) \mathrm{d}v &=&\int_{v\ge \ell_p(\theta)}f(\theta,v) \mathrm{d}v\\
&=&(1-F_{v|\theta}[\ell_p(\theta)|\theta])f_{\theta}(\theta).
\end{eqnarray*}
Here $f_{\theta}(\theta)$ represents the probability density function of $\theta$ in $F$ and $F_{v|\theta}[\ell_p(\theta)|\theta]$ represents the conditional cumulative probability function. Thus we can simplify our problem as
\begin{align}
    \max& \int_{\theta}(\ell_p(\theta)-\theta\ell_p'(\theta))(1-F_{v|\theta}[\ell_p(\theta)|\theta])f_{\theta}(\theta)\mathrm{d}\theta
    \label{new_main}\\
   \text{s.t.} & ~\ell_p(\cdot) \textrm{ is concave and non-decreasing}.\nonumber
\end{align}
\cmt{
The following result can be seen as an evidence that posted-pricing is a good choice with a mild assumption in our daily life.
\begin{theorem}
If $F_{v|\theta}[v|\theta]=F_v[v]$, i.e., the distributions of $v$ and $\theta$ are independent, the optimal separation function $\ell_p(\cdot)$ is a constant function. 
\end{theorem} 
\begin{proof}
By Equation~\eqref{new_main}, if $F_{v|\theta}[v|\theta]=F_v[v]$, we have that
\begin{eqnarray*}
    &&\int_\theta (\ell_p(\theta)-\theta \ell_p'(\theta))(1-F_{v|\theta}[\ell_p(\theta)|\theta])f_{\theta}(\theta)\mathrm{d}\theta\\
    &=&\int_\theta (\ell_p(\theta)-\theta \ell_p'(\theta))(1-F_v[\ell_p(\theta)])f_{\theta}(\theta)\mathrm{d}\theta\\
    &\le&\int_\theta f_{\theta}(\theta)\max_{\ell_p(\theta)} \{\ell_p(\theta)(1-F_v[\ell_p(\theta)])\}\mathrm{d}\theta.
\end{eqnarray*}
We set $\ell_p(\theta)=\argmax_q q(1-F_v(q))$ for all $\theta$, i.e., the separation line is horizontal. The inequality will become an equality and $\ell_p(\cdot)$ is the optimal solution of Problem~\eqref{new_main}. 

Next, we infer the pricing function. By definition in Equation~\eqref{infer_p}, we have $H=\{0\}$,  
$p(0)=\ell_p(\theta)=\argmax_q q(1-F_v(q))$ and $p(t)=p(0)$ for all $t$. Hence, the pricing scheme is to set a posted price and each buyer will buy it directly if their value of the item is equal to or larger than the price and spend no extra time on the deal.
\end{proof}
}

\section{The Optimal Pricing Scheme}
Above we have shown that the posted-pricing is optimal if we assume that the distributions of the value and the CPUT are independent.
However, it is not the case when we consider arbitrary distributions, which makes the question much more complicated.
In this section, we first show how to find the optimal scheme in a discrete distribution setting and then shows that the problem with a continuous distribution can be solved with arbitrarily accuracy by applying discretization technique.

The following theorem shows that given a discrete distribution of buyers' types with finite size, the optimal pricing scheme can be found efficiently.
\begin{theorem}\label{thm:2}
Given a discrete distribution of buyers' types, whose support size is $n$, the optimal pricing scheme can be found in time complexity $O(n^4\log n)$.
\end{theorem}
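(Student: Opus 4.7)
The plan is to reduce the infinite-dimensional optimization over concave non-decreasing separation functions $\ell_p$ to a polynomial-size search over piecewise linear candidates, then solve that search by dynamic programming. Sort the support $\{(\theta_i,v_i)\}_{i=1}^{n}$ by $\theta$ and work with the reformulation in~\eqref{new_main}.

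\textbf{Structural reduction.} I would first show that there is an optimal $\ell_p$ that is piecewise linear with at most $n$ pieces, each passing through at least two support points. Piecewise linearity is natural because the objective in~\eqref{new_main} depends on $\ell_p$ only through its values and its subgradients at the finitely many support points, and any concave non-decreasing function can be replaced by its piecewise linear interpolant through these points without loss. That each piece is pinned by two support points follows from a local perturbation argument: a piece with fewer than two active pivots from the support can be rotated about its single pivot (or translated vertically if it has no pivot) in a direction that strictly improves the revenue, contradicting optimality. Boundary pieces (the leftmost/steepest and rightmost/flattest) that touch only a single support point are handled separately by enumerating over the $O(n)$ choices of that pivot together with a bounded slope set. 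This shrinks the full set of candidate lines to $O(n^2)$.

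\textbf{Dynamic program.} Let $D(i,j)$ for $i<j$ denote the maximum revenue collectable from all types whose $\theta$ lies at or below the right-breakpoint of the current piece, given that the rightmost piece committed so far is the line $L_{ij}$ through $(\theta_i,v_i)$ and $(\theta_j,v_j)$. Transitions $D(i',j')\gets D(i,j)+\Delta(L_{ij},L_{i'j'})$ append a new piece $L_{i'j'}$ with strictly smaller slope (preserving concavity), where $\Delta$ collects the revenue from types whose $\theta$ lies in the range over which $L_{i'j'}$ is the pointwise minimum of the committed lines, namely $L_{i'j'}(0)\cdot\sum_m f_m\cdot\mathbb{I}[v_m\geq L_{i'j'}(\theta_m)]$ restricted to that $\theta$-range. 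With $O(n^2)$ DP states and $O(n^2)$ valid next-pieces per state, there are at most $O(n^4)$ transitions. To achieve $O(\log n)$ per transition I would precompute, for each candidate line $L$, the support types sorted by the signed residual $v_m-\mathrm{slope}(L)\cdot\theta_m$; each $\Delta$ then reduces to a prefix-sum query answered by binary search, giving the overall $O(n^4\log n)$ bound.

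\textbf{Main obstacle.} The hardest step is the structural reduction. The rotation / sliding argument must carefully handle (i) how the breakpoints with the two neighboring pieces shift during the perturbation, (ii) support types that cross the separation line during the perturbation and thereby change the revenue discontinuously, and (iii) the edge cases for the leftmost and rightmost pieces, whose slopes are not bounded on one side by an adjacent piece. Once this structural lemma is pinned down, the DP and the accompanying revenue bookkeeping are comparatively routine and yield the claimed complexity.
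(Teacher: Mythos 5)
Your proposal matches the paper's proof in all essentials: the same structural reduction to a concave piecewise linear separation line whose non-horizontal segments each pass through two support points (established by pushing up segments with no pivot and rotating about a single pivot, with horizontal/boundary segments as the exceptional case), followed by the same dynamic program over $O(n^2)$ line-defining pairs with $O(n^2)$ transitions and an $O(\log n)$ prefix-mass query per transition. The only cosmetic difference is that the paper phrases the perturbation step as weakly revenue-non-decreasing (so that \emph{some} optimum has the claimed structure) rather than as a strict-improvement contradiction, which is the form you would want when you write it out.
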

\begin{proof}
Suppose these $n$ different types in the distribution are $\left\{Q_i=(\theta_i,v_i) \right\}_{i\in[n]}$. Without the loss of generality, we can assume that $\theta_1\le
\theta_2\le\ldots\le \theta_n$. In addition, there are $m\le n$ different
$\theta$'s over these $n$ types. We let $(\theta_{\beta(1)},\ldots,\theta_{\beta(m)})$ denote all the different CPUTs.

Given the optimal pricing scheme, each type chooses how much time to invest in the deal or leave without buying anything. We denote by $\ell^*$ the separation line in the optimal pricing.
Given any $i\in[m]$, we define a new function $K_i(\theta)={\ell^*}'(\theta_{\beta(i)}) (\theta - \theta_{\beta(i)}) + \ell^*( \theta_{\beta(i)})$.
Due to the uniqueness of $K_i$'s, we can construct a new separation function $K$ which outputs the minimum of
$K_i$'s given any $\theta>0$. Formally, we
define $$K(\theta)=\min_{i\in[m]}\left\{ K_i(\theta) \right\}.$$
It is easy to see that $K$ is a piecewise linear function with respect to $\theta$.
If we use $K$ as a separation line, each buyer with the same type will pay the same amount of money. Thus, $K$ is also the optimal solution.

We can then search over all such piecewise linear boundaries and find the
optimal one. Note that on any linear segment where two lines $K_i$ and $K$ overlap, i.e., $\{(\theta,K_i(\theta))|
K_i(\theta)=K(\theta)\}$, if it is not horizontal then there exist two types $Q_j$ and $Q_k$ lying on it. Otherwise, we can make
transformations as follows, maintaining the revenue non-decreasing:
\begin{itemize}
\item If the line segment
passes no type, we can push it upwards until it touches some discrete type. This
operation only increases the payment for some type.
\item If the line segment only passes one type $Q_j$ in the distribution, we rotate
  this line on $Q_j$ in the clock-wise direction until it touches another type, say $Q_k$ or becomes horizontal on the premise that this line passes through $Q_j$. 
  If $\theta_k<\theta_j$, type $Q_k$ will pay more. If $\theta_k>\theta_j$ and type $Q_k$ is already on the line $K$, then type $Q_k$'s payment is unchanged.
   If $\theta_k>\theta_j$ and type $Q_k$ is beneath the line $K$, then the buyer with type $Q_k$ will buy an item now. So this operation would never decrease the revenue.
\end{itemize}

We define two functions to make our descriptions easier. Let $d(i,j)=(v_j-v_i)/(\theta_j-\theta_i)$ denote the slope of the straight line segment connecting points $Q_i$ and $Q_j$. 
The other is the pricing function $z(i,j)=v_i-d(i,j)\theta_i$.
Let ${\rm cross}(i,j,k,h)$ denote the intersection point between the straight line passing types $Q_i,Q_j$ and the straight line passing $Q_k,Q_h$. 

We abuse the notation and let $f(Q_i,Q_j)$ (assume that $\theta_i<\theta_j$) denote the total sum of probabilities of types $Q=(\theta,v)$
lying above the straight line connecting type points  $Q_i$ and $Q_j$ and
$\theta\in[\theta_i,\theta_j)$. We define a partial order over the pair of types. Given $i$, $j$, $k$, $h\in[n]$, we say
$(i,j)\prec (k,h)$ if the point ${\rm cross}(i,j,k,h)$ is larger than
$Q_j=(\theta_j,v_j)$ and weakly smaller than $Q_k=(\theta_k,v_k)$ in both coordinates.

Now we are ready to present our dynamic programming algorithm. The recursive
formula is as follows,
\begin{align*}
  {\rm OPT}(i,j)=\max_{(i,j)\prec(k,h)}\{&z(i,j)f(Q_i,x) +{\rm OPT}(k,h)+ z(k,h)f(x,Q_k)\},
\end{align*}
where $x={\rm cross}(i,j,k,h)$.
Here ${\rm OPT}(i,j)$ represents the optimal revenue if the distribution is restricted to $(Q_i,Q_{i+1},\ldots,Q_n)$ and the line segment connecting $Q_i$ and $Q_j$ is part of the separation line.
To compute ${\rm OPT}(i,j)$, the algorithm enumerates all possible line segments that adjacent to the line segment connecting $Q_i$ and $Q_j$.
Then the optimal revenue is the maximum over $\max_{i,j} OPT(i,j)$. When we have found the optimal revenue, we can deduce the optimal separation function in reverse.

This dynamic programming can be solved efficiently. We first prepare all possible $w(Q_i,Q_j),~ \forall i,j$ values which takes $O(n^2)$ time. There are $O(n^2)$ sub-problems ${\rm OPT}(i,j)$ to solve. For each one, it enumerates all $(k,h)$ pairs which has $O(n^2)$ cases. Given $(i,j,k,h)$, $x$ can be solved in $O(1)$ time. By
utilizing the prepared values $w(Q_i,Q_j),~\forall i,j$, we can find $w(Q_i,x)$ in $O(\log n)$ time. Therefore, the overall complexity is $O(n^4\log n)$.
\end{proof}

Next we solve the continuous distributions approximately by discretization. The basic idea is clear: we partition the given joint continuous distribution as a discrete distribution, and calculate the optimal solution with respect to the discrete one efficiently. Given any continuous distribution $\mathcal{C}$ over  $[\underline{\theta},\overline{\theta}]\times[\underline{v},\overline{v}]$, we construct a new discrete distribution $\mathcal{D}$ as follows: we partition the space into squares of size $\epsilon\times\epsilon$ where $\epsilon>0$ is sufficiently small such that the probability of the types in each area $[k\epsilon,(k+1)\epsilon]\times[\underline{v},\overline{v}]$ is bounded by $\eta$. Then we put all the probability
density of one square to the mass point at the right bottom of the square. Note that $\eta$ could be arbitrarily small as long as $\epsilon$ is small enough. The following theorem says the approximation loss could be arbitrarily small.

\begin{theorem}\label{thm:3}
In the optimal scheme for distribution $\mathcal{C}$ and $\mathcal{D}$, the separation lines are denoted by ${\rm OPT}_C$ and ${\rm OPT}_D$ respectively. For the given distribution $\mathcal{C}$, the difference between the revenue achieved by separation lines ${\rm OPT}_D$ and ${\rm OPT}_C$ is within $\eta\overline{v}+\epsilon$.
\end{theorem}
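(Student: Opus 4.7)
My plan is to upper-bound $R_{\mathcal{C}}({\rm OPT}_C)-R_{\mathcal{C}}({\rm OPT}_D)$ by passing through the discrete problem via the natural coupling between $\mathcal{C}$ and $\mathcal{D}$: each continuous type $(\theta,v)\sim\mathcal{C}$ is identified with its right-bottom discretized image $(\theta^{*},v^{*})$ in $\mathcal{D}$, so $\theta\le\theta^{*}<\theta+\epsilon$ and $v-\epsilon<v^{*}\le v$. Two structural observations drive everything. First, because any admissible separator $\ell$ is non-decreasing, $v^{*}\ge\ell(\theta^{*})$ implies $v\ge v^{*}\ge\ell(\theta^{*})\ge\ell(\theta)$, so every $\mathcal{D}$-buyer under the coupling is automatically a $\mathcal{C}$-buyer. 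Second, the per-type payment $p_\ell(\theta)=\ell(\theta)-\theta\ell'(\theta)$ is non-decreasing in $\theta$, since $\partial_\theta p_\ell(\theta)=-\theta\ell''(\theta)\ge 0$ by concavity of $\ell$.

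With these in hand, the first step is a universal one-sided comparison: for every admissible $\ell$,
\[
R_{\mathcal{D}}(\ell)-R_{\mathcal{C}}(\ell)\;\le\;\mathbb{E}_{\mathcal{C}}\bigl[(p_\ell(\theta^{*})-p_\ell(\theta))\,\mathbb{I}(v\ge\ell(\theta))\bigr]\;\le\;\eta\bar v.
\]
The first inequality combines the two observations above; the second partitions the expectation by the vertical strips $[k\epsilon,(k+1)\epsilon]$, bounds each strip's mass by $\eta$, and telescopes $\sum_k\bigl(p_\ell((k{+}1)\epsilon)-p_\ell(k\epsilon)\bigr)=p_\ell(\bar\theta)-p_\ell(\underline\theta)\le\bar v$. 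Specialising to $\ell={\rm OPT}_D$ immediately gives $R_{\mathcal{C}}({\rm OPT}_D)\ge R_{\mathcal{D}}({\rm OPT}_D)-\eta\bar v$.

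The second and harder step is to produce a $\mathcal{D}$-feasible test separator whose $\mathcal{D}$-revenue is essentially $R_{\mathcal{C}}({\rm OPT}_C)$. I will take the diagonally shifted line $\tilde\ell(\theta):={\rm OPT}_C(\theta-\epsilon)-\epsilon$ on $\theta\ge\underline\theta+\epsilon$, which inherits concavity and monotonicity from ${\rm OPT}_C$. The two-coordinate shift is exactly the amount needed to keep every $\mathcal{C}$-buyer of ${\rm OPT}_C$ as a $\mathcal{D}$-buyer of $\tilde\ell$: for $\theta\in[k\epsilon,(k{+}1)\epsilon)$ with $v\ge{\rm OPT}_C(\theta)\ge{\rm OPT}_C(k\epsilon)$, we have $v^{*}>v-\epsilon\ge{\rm OPT}_C(k\epsilon)-\epsilon=\tilde\ell((k{+}1)\epsilon)$. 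A short calculation yields $p_{\tilde\ell}((k{+}1)\epsilon)=p_{{\rm OPT}_C}(k\epsilon)-\epsilon\bigl(1+t^{*}(k\epsilon)\bigr)$, so the per-buyer revenue drop in strip $k$ is at most $\bigl[p_{{\rm OPT}_C}((k{+}1)\epsilon)-p_{{\rm OPT}_C}(k\epsilon)\bigr]+\epsilon+\epsilon\,t^{*}(k\epsilon)$. Summed across strips these three contributions collapse respectively to $\eta\bar v$ (telescoping the payment difference under the per-strip mass bound), $\epsilon$ (the total buyer mass is at most $1$), and $\eta\bar v$ (via the concavity-based Riemann bound $\sum_k\epsilon\,{\rm OPT}_C'(k\epsilon)\le\bar v$).

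Chaining the two steps with the optimality inequality $R_{\mathcal{D}}({\rm OPT}_D)\ge R_{\mathcal{D}}(\tilde\ell)$ yields $R_{\mathcal{C}}({\rm OPT}_D)\ge R_{\mathcal{C}}({\rm OPT}_C)-(\eta\bar v+\epsilon)$ once the several $\eta\bar v$-scale contributions are merged into a single additive budget. The main obstacle I anticipate is exactly this merging: the slope-weighted term $\sum_k\epsilon\,t^{*}(k\epsilon)\Pr[\text{strip }k]$ must be absorbed into the same $\eta\bar v$ budget as the telescoping payment drop rather than producing a separate $\bar v$-scale loss, and this hinges on riding the Riemann-sum inequality $\sum_k\epsilon\,{\rm OPT}_C'(k\epsilon)\le{\rm OPT}_C(\bar\theta)-{\rm OPT}_C(\underline\theta)\le\bar v$ supplied by the concavity of ${\rm OPT}_C$.
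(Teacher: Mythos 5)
Your skeleton is the same as the paper's: couple each continuous type with its right-bottom discretized image, observe that discretized participation implies continuous participation and that the payment $\ell(\theta)-\theta\ell'(\theta)$ is non-decreasing in $\theta$ by concavity, bound ${\rm OPT}_D(\mathcal{D})-{\rm OPT}_D(\mathcal{C})$ by $\eta\overline{v}$ via the strip-by-strip telescoping sum, and lower-bound ${\rm OPT}_D(\mathcal{D})$ by feeding a shifted copy of ${\rm OPT}_C$ into the discrete problem. The one genuine difference is the shift. The paper shifts ${\rm OPT}_C$ \emph{vertically} down by $\epsilon$ and asserts that every $\mathcal{C}$-buyer of ${\rm OPT}_C$ survives discretization; since the discretization also moves $\theta$ rightward and $\ell$ is non-decreasing, that assertion is not immediate from a purely vertical shift. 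Your \emph{diagonal} shift $\tilde\ell(\theta)={\rm OPT}_C(\theta-\epsilon)-\epsilon$ makes the participation-preservation step airtight, which is a real gain in rigor. What it costs you is the extra per-strip payment drop $\epsilon\, t^{*}(k\epsilon)$, which you then have to absorb with the Riemann-sum bound $\sum_k\epsilon\,{\rm OPT}_C'(k\epsilon)\le\overline{v}$; the paper's vertical shift loses only $\epsilon$ per buyer in this step and so avoids that term entirely.

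The remaining issue is quantitative: as written, your three contributions add up to $3\eta\overline{v}+\epsilon$, not $\eta\overline{v}+\epsilon$, and the ``merging into a single additive budget'' that you flag as the main obstacle is asserted rather than carried out. I do not see how to collapse the three $\eta\overline{v}$-scale terms into one with your construction — they come from genuinely different sources (the coupling loss for ${\rm OPT}_D$, the telescoped payment drop of ${\rm OPT}_C$ across strips, and the slope term from the horizontal part of your shift). Since $\eta$ is a free discretization parameter that can be driven to zero, $3\eta\overline{v}+\epsilon$ is just as good for every downstream use, but it does not literally match the constant in the theorem statement; you should either state your bound as $O(\eta\overline{v})+\epsilon$ or adopt the paper's vertical shift for the test separator (patching its participation argument separately) to recover the exact constant. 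Two small hygiene points: you need $\tilde\ell\ge 0$ (truncate at zero near $\underline{\theta}$, where ${\rm OPT}_C(\theta-\epsilon)-\epsilon$ may be negative, so that payments stay non-negative), and your first chain of inequalities implicitly uses $p_\ell\ge 0$ when discarding buyers who participate in $\mathcal{D}$ but not in $\mathcal{C}$; both are easy to supply but should be said.
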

\begin{proof}
Denote by $\widehat{\rm OPT}_C$ the separation function achieved by shifting ${\rm OPT}_C$ with a
distance $\epsilon$ downwards. This operation guarantees that every buyer who lies above the separation line ${\rm OPT}_C$ will stay above the separation line $\widehat{\rm OPT}_C$ after the discretization. Therefore, any buyer who purchases the item in the scheme corresponding to ${\rm OPT}_C$ will continue purchasing the item in the scheme corresponding to $\widehat{\rm OPT}_C$ but pay less. We use ${\rm OPT}_C(\mathcal{C})$ to represent the revenue achieved by ${\rm OPT}_C$ on distribution $\mathcal{C}$. $\widehat{\rm OPT}_C(\mathcal{D})$ is defined similarly. Then we have
\begin{equation}
 {\rm OPT}_C(\mathcal{C})\le \widehat{\rm OPT}_C(\mathcal{D})+\epsilon.\label{OPT_1}
\end{equation}

Since ${\rm OPT}_D$ is the optimal scheme for distribution $\mathcal{D}$, it holds that
\begin{equation}
 \widehat{\rm OPT}_C(\mathcal{D})\le {\rm OPT}_D(\mathcal{D}).\label{OPT_2}
\end{equation}

In the scheme corresponding to ${\rm OPT}_D$,
a buyer always purchases the item as long as her type after discretization purchases the item. But her payment might decrease due to the horizontal shift in the discretization. For a buyer whose type lying in the space $[k\epsilon,(k+1)\epsilon]\times[\underline{v},\overline{v}]$,
her payment decreases at most
\begin{align*}
  [\ell_D(\theta)-\theta\ell'_D(\theta)]\left|^{(k+1)\epsilon}_{\theta=k\epsilon}\right.,  
\end{align*}
then the total decrease of the payment is at most
\begin{equation*}
    \sum_k \eta\cdot [\ell_D(\theta)-\theta\ell'_D(\theta)]\left|^{\theta=(k+1)\epsilon}_{k\epsilon}=\eta(\overline{v}-0) \right..
\end{equation*}
Thus we have that 
\begin{equation}
{\rm OPT}_D(\mathcal{D})\le {\rm OPT}_D(\mathcal{C})+\eta\overline{v}.\label{OPT_3}
\end{equation}

The combination of Equations ~\eqref{OPT_1}, \eqref{OPT_2} and \eqref{OPT_3}, we have
\begin{equation*}
    {\rm OPT}_C(\mathcal{C})\le {\rm OPT}_D(\mathcal{C})+\eta\overline{v}+\epsilon. 
\qedhere
\end{equation*}
\end{proof}

\section{Revenue and Wasted Time}
In this section, we investigate the impact by switching from fixed pricing scheme to our optimal pricing scheme. Specifically, we first examine the increased revenue and then the degree of wasted time. 

In real-world scenarios, the seller may set the pricing function as a step function for simplicity. When the time spent approaches a predefined threshold, the price decreases to a lower level. As such, the separation function becomes piecewise constant. The following result demonstrates that a $k$-piecewise linear separation function could achieve $k$ times as much revenue as the optimal fixed pricing scheme. 

\begin{theorem}
For $k$-step pricing scheme, the revenue is at most $k$ times the optimal fixed pricing scheme. Moreover, the ratio is tight in the worst case.
\end{theorem}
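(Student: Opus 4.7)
The plan is to establish both the factor-$k$ upper bound and a matching tight construction.

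\textbf{Upper bound.} Fix any $k$-step pricing scheme with price levels $p_1>p_2>\cdots>p_k$ taken at times $t_1<t_2<\cdots<t_k$. Every buyer who purchases ends up paying exactly one of the $p_i$'s; writing $\pi_i=\Pr[\text{buyer pays }p_i]$, the total revenue is $\sum_{i=1}^k p_i \pi_i$. Any buyer who pays $p_i$ at time $t_i$ must satisfy $v\geq p_i+\theta t_i\geq p_i$, so $\pi_i\leq \Pr[v\geq p_i]$. With ${\rm OPT}_{\rm fix}=\sup_{p} p\Pr[v\geq p]$ denoting the optimal fixed-pricing revenue,
\begin{equation*}
\sum_{i=1}^k p_i \pi_i \leq \sum_{i=1}^k p_i\Pr[v\geq p_i]\leq k\cdot{\rm OPT}_{\rm fix}.
\end{equation*}

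\textbf{Tightness via an equal-revenue ladder.} I would exhibit a discrete distribution with $k$ types, parameterized by a growth rate $\rho>1$, for which the ratio tends to $k$ as $\rho\to\infty$. Concretely, take $p_i=\rho^{k-i}$ and $t_i=i$, and pick $\theta_i$ in the interior of the incentive interval $[\rho^{k-i-1}(\rho-1),\,\rho^{k-i}(\rho-1)]$ (for instance $\theta_i=\rho^{k-i-1/2}(\rho-1)$) so that under the step pricing $p(t)=p_i$ on $[t_i,t_{i+1})$ a type-$i$ buyer optimally picks step $i$. Set $v_i=p_i+\theta_i t_i$ so that IR binds, and put mass $q_i\propto 1/p_i$ on type $i$, normalized so $\sum_i q_i=1$. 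Then each step contributes equal revenue $c=p_i q_i$, so the $k$-step revenue is $kc$. At any fixed price $p_j$, the condition $v_i\geq p_j$ fails for every $i\geq j+1$ because $v_i\asymp i\rho^{k-i+1/2}$ while $p_j=\rho^{k-j}$; hence only types $i\leq j$ buy and the fixed-pricing revenue at $p_j$ is $c\sum_{i=1}^j \rho^{i-j}$, which for every $j$ tends to $c$ as $\rho\to\infty$. Thus ${\rm OPT}_{\rm fix}\to c$ and the ratio approaches $k$, proving tightness.

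\textbf{Main obstacle.} The technical care lies in simultaneously handling two kinds of incentive conditions on $\theta_i$: on one hand $\theta_i$ must sit in a window narrow enough that type $i$ strictly prefers step $i$ to every other step in the $k$-step scheme; on the other hand the same $\theta_i$ must keep $v_i$ small enough that under any fixed price $p_j$ no type with index beyond $j$ crosses the buy threshold. These two requirements pull $\theta_i$ in opposite directions, and checking that a geometrically spaced ladder (the $\rho^{k-i-1/2}(\rho-1)$ choice) threads both as $\rho\to\infty$ is the main calculation; the rest is bookkeeping with geometric sums.
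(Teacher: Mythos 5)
Your upper bound is correct and is essentially the paper's argument: buyers who pay $p_i$ would also buy at posted price $p_i$, so each step's contribution is at most ${\rm OPT}_{\rm fix}$.

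The tightness construction, however, has a fatal gap. You verify that posted prices drawn from the set $\{p_j\}$ each earn about $c$, but you never consider posted prices at the buyers' \emph{values} $v_i$, and these break the example. Since you set $v_i=p_i+\theta_i t_i$ with $t_i=i$ and $\theta_i=\rho^{k-i-1/2}(\rho-1)$, the time cost dominates the price: $\theta_i t_i = i\rho^{-1/2}(\rho-1)\,p_i$, so $v_i\approx i\rho^{1/2}p_i\gg p_i$. A posted price of $v_i$ then extracts at least $v_i q_i \approx i\rho^{1/2}c$ from type $i$ alone (e.g.\ price $v_1\approx\rho^{k-1/2}$ sold to type $1$ with mass $q_1\approx \rho^{1-k}c$ already yields $\rho^{1/2}c$). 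Hence ${\rm OPT}_{\rm fix}\ge \rho^{1/2}c$ while your $k$-step revenue is $kc$, and the ratio tends to $0$, not $k$, as $\rho\to\infty$. The problem is structural, not a matter of tuning $\theta_i$ within the window you identified: with uniformly spaced times $t_i=i$ and geometric prices, incentive compatibility forces $\theta_i\ge \rho^{k-i-1}(\rho-1)$, hence $\theta_i t_i\ge i(1-1/\rho)p_i$, so most of each buyer's value is burned as time, and a posted price can convert that burned value directly into revenue. You correctly sensed this tension in your ``main obstacle'' paragraph, but the geometric-mean choice does not thread it.

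The paper's construction resolves exactly this issue by making the time spent at step $i$ decay geometrically (slope $\epsilon^i$ of the separation line) while $\theta_i$ grows like $(r/\epsilon)^{i-1}$, so the time cost $\theta_i\epsilon^i$ is only an $\epsilon$-fraction of the value $v_i=r^i$; the $k$-step scheme then extracts nearly the full value $(1-\epsilon)r^i$ from each type. Combined with an equal-revenue marginal value distribution ($\Pr[v\ge r^i]\cdot r^i=r$ for every $i$), every posted price --- including those at the $v_i$'s --- earns only $r$, while the $k$-step scheme earns about $kr$. To repair your proof you would need both ingredients: negligible time loss (which forces rapidly shrinking step times, not $t_i=i$) and an equal-revenue value ladder so that no single price, at any level, captures more than one step's worth of revenue.
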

\begin{proof}
Suppose there are $k$ different prices $p_1,\ldots,p_k$ in the pricing
function. The total revenue is the sum of payments collected from the buyers who
pay $p_i$ in this pricing scheme for all $i\in[k]$. Note that the revenue
gained from the buyers, who pay $p_i$ is no more than that of
the posted pricing with the price $p_i$, not to say the best scheme using fixed price.
Therefore, the total revenue using $k$-step pricing scheme is at most $k$ times the optimal fixed pricing scheme. It can be construed as the seller sells the item $k$ times and with a different fixed price in each time.

Now we illustrate an example and show that this above ratio is tight.
For any  $r>1$, $k$, and a small constant $\epsilon>0$, we construct a discrete
distribution with $k$ types $Q_i=(\theta_i,v_i)$, where $v_i=r^i$ and
$$
\theta_i = (r-1)\cdot \frac{(r/\epsilon)^i-1}{r/\epsilon-1}.
$$
The probability of being type $Q_i$ is as follows,
for $1\le i\le k-1$ we set $\Pr[Q=Q_i]=r^{1-i}-r^{-i}$ and $\Pr[Q=Q_k]=r^{1-k}$
otherwise. One can easily verify that such a distribution is well defined.

The marginal value distribution is similar to the equal revenue distribution
where the revenue is unchanged, no matter what the posted price is. In our
distribution, the revenue is always $r$ for every posted price $r^i$, given $i\in[k]$.

We construct a separation function which almost yields the revenue $kr$
indeed. Our separation function $\ell(\cdot)$ connects all the type points using
straight line segment:
$$
\ell(\theta)=
\begin{cases}
  r^i + \epsilon^i(\theta-\theta_i), &  \theta\in[\theta_i,\theta_{i+1}); \\
  r^k, &\theta\ge\theta_k.
\end{cases}
$$

It is easy to see that our function $\ell(\cdot)$ is convex and weakly increasing since its derivative is weakly decreasing and always non-negative. Hence function $\ell$ is a valid separation function.

According to Equation~\eqref{new_main} in continuous setting, the revenue under
the discrete distribution becomes
\begin{equation*}
\sum_{1\le i\le k}\left[ \ell(\theta_i)-\theta_i \ell'(\theta_i) \right]\cdot\Pr[Q=Q_i] .
\end{equation*}
By our assumption, when a buyer has several actions with the same highest
  utility, she will pays the highest possible money in favor of the seller.
Therefore, we should set $\ell'(\theta_i)$ to be the right-handed derivative.
\begin{eqnarray*}
    \ell(\theta_i)-\theta_i \ell'(\theta_i)    &=&r^i-\epsilon^i\theta_i\\
    &= &r^i-\frac{\epsilon^i(r-1)((r/\epsilon)^i-1)}{r/\epsilon-1}\\
    &\ge &r^i-\frac{r-1}{r-\epsilon}\cdot \epsilon r^i\\
    &\ge &(1-\epsilon)r^i.
\end{eqnarray*}
By plugging it into the revenue formula, the revenue is at least
\begin{equation*}
  \sum_{1\le i\le k}(1-\epsilon)r^i(r^{1-i}-r^{-i})=\frac{1-\epsilon}{1-r^{-1}}rk.
\end{equation*}
Taking the limit as $\epsilon$ to zero and $r$ to infinity yields the ratio to be $k$.
\end{proof}

People are wasting time in their pursuit of low prices. We measure the value of wasted time. Suppose a customer with type $(\theta,v)$ would buy one item given the separation function $\ell(\cdot)$. According to Lemma~\ref{lemma_time}, she spends $\ell'(\theta)$ amount of time. We name her value of this amount of time as time loss. In this case, her time loss is $l'(\theta) \theta$. Her money payment is $\ell(\theta)-\ell'(\theta)\theta$. Let $loss$ denote the total time loss from all customers and $rev$ denote the total money payment collected from all customers. The following theorem says while pursuing profit it could also cause a serious waste of time.

\begin{theorem}
Given a discrete distribution with $k$ different types, we have $loss\le (k-1)rev$ when using the optimal pricing scheme. The ratio is tight.
\end{theorem}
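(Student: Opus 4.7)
The plan is to sandwich $loss + rev$ between the total value of purchasing types and $k\cdot rev$. For each type $Q_i = (\theta_i, v_i)$ that purchases under the optimal separation function $\ell$, her payment is $\pi_i = \ell(\theta_i) - \theta_i\ell'(\theta_i)$ and her time loss is $\tau_i = \theta_i\ell'(\theta_i)$ (using the right-derivative at a kink, consistent with the seller-favoring tie-breaking convention of the preliminaries), so $\pi_i + \tau_i = \ell(\theta_i) \le v_i$. Weighting by the type probabilities $p_i = \Pr[Q = Q_i]$ and summing over purchasing types yields
\[
loss + rev \;=\; \sum_{i\,:\,v_i \ge \ell(\theta_i)} p_i\, \ell(\theta_i) \;\le\; \sum_{i=1}^{k} p_i\, v_i.
\]

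Next I will invoke optimality to bound the right-hand side. For each $i \in [k]$ the fixed-price scheme at price $v_i$ is a feasible pricing function that sells at least to type $Q_i$ itself, so it earns at least $v_i p_i$; hence by optimality $rev \ge v_i p_i$ for every $i$. Summing these $k$ inequalities yields $k\cdot rev \ge \sum_i v_i p_i$, and combining with the previous display gives $loss + rev \le k\cdot rev$, i.e., $loss \le (k-1)\,rev$.

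For tightness, the plan is to construct a family of $k$-type distributions in which every fixed-price scheme and one sloped-line scheme all attain the same optimal revenue, while the sloped scheme incurs a large time loss. Take $v_1 = 1$ and let $v_2, \ldots, v_k \to \infty$; set $p_i = C/v_i$ with $C = 1/\sum_j(1/v_j) \to 1$ so that the marginal on $v$ is essentially equal-revenue, and place each type on the common line $\ell(\theta) = 1 + a\theta$ by setting $\theta_i = (v_i - 1)/a$ for a fixed $a > 0$. Under this sloped scheme every buyer pays exactly $1$ while type $i$ incurs time loss $v_i - 1$, so $rev = 1$ and $loss = \sum_i p_i(v_i - 1) = kC - 1 \to k - 1$. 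One checks that every fixed price $v_i$ also yields revenue $C \to 1$, so the sloped scheme is among the optimal solutions in the limit. The principal subtlety is this tie: the sloped scheme only matches (rather than strictly dominates) fixed pricing, so one must invoke the seller-favoring tie-breaking convention or use a small perturbation of the $\theta_i$'s to make the sloped scheme strictly optimal.
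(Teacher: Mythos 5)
Your upper-bound argument is correct, and it is a clean repackaging of the paper's: both rest on the observation that posting the price $v_i$ already earns $v_i p_i$, so $rev\ge v_ip_i$ for every $i$. The paper bounds each type's loss individually by $v_if_i\le rev$ and uses the fact that the type on the horizontal tail of the optimal separation line has zero loss to count only $k-1$ terms; you instead write $loss+rev=\sum_{i\in B}p_i\,\ell(\theta_i)\le\sum_i p_iv_i\le k\cdot rev$, which gets the $-1$ for free from the $rev$ on the left and avoids having to argue that some type has no time loss. That half of your proof is fine and arguably tidier.

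The tightness construction, however, does not work: the sloped line $\ell(\theta)=1+a\theta$ is not optimal for your distribution, and not merely because of a tie. Take $k=2$: types $Q_1=(0,1)$ with probability $C$ and $Q_2=\bigl((v_2-1)/a,\,v_2\bigr)$ with probability $C/v_2$, where $C=v_2/(v_2+1)$. Consider the two-step pricing function $p(0)=v_2$ and $p(t)=1$ for $t\ge T$ with $T>a$ (separation line: a steep segment from $(0,1)$ up to height $v_2$, then flat). Type $1$ has zero CPUT, so she waits and pays $1$ with zero time loss; type $2$ faces $\min(v_2,\,1+\theta_2T)>v_2$ on the waiting branch, so she pays $v_2$ immediately. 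Revenue is $C+ (C/v_2)\cdot v_2=2C\to 2$, with \emph{zero} total time loss, whereas your sloped scheme earns only $1$. Since $\sum_ip_iv_i=2C$, this steep-then-flat scheme is in fact exactly optimal, and the optimal scheme has $loss/rev=0$, not $k-1$. The failure generalizes: because you place $Q_1$ at $\theta_1=0$ and put mass $C/v_i$ directly on each type (rather than making $C/v_i$ a tail probability), the seller can screen along the time axis and extract far more than the collinear scheme with essentially no wasted time. The paper's construction avoids this by putting the types on a curve with slope close to $1$ and $v_i\approx\theta_i$ (so every tangent line has intercept near $0$ and no scheme can extract much money from the low types), with geometrically growing $\theta_i$ and weights tuned so that each of the $k-1$ individual time losses is comparable to the entire optimal revenue, which is itself dominated by the single payment $v_kf_k$ from the flat tail. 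Your tightness half needs to be replaced by a construction of this kind, together with an argument (absent here, and admittedly only sketched in the paper) that the connect-the-dots separation line is actually optimal for it.
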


\begin{proof}
We consider the optimal separation line, denoted by $\ell(\cdot)$. 
According to the construction algorithm of the optimal separation line introduced in Theorem~\ref{thm:2}, the separation line would be horizontal when $\theta$ is large enough and there is a type with which the buyer does not have a time loss. We assume it is type $(\theta_k,v_k)$. We abuse the notation and let $f_i$ denote the probability of being type $(\theta_i,v_i)$.  For every type $(v_i,\theta_i)$, we could set a posted price and collect a revenue at least $v_if_i$.

Hence, we have $rev\ge v_i$. For every type $(v_i,\theta_i), i<k$, if $v_i\ge\ell(\theta_i)$, her loss would be $loss_i=(v_i-\ell'(\theta)\theta)f_i\le v_if_i\le rev$. Therefore, we have $loss\le \sum_{i<k} loss_i \le (k-1)rev$.

To prove the ratio is tight, we construct an instance. Let $d$ denote a large number which will be determined later. Let $\epsilon$ denote a small number close to zero which will be determined later.
We give a discrete distribution with $k$ possible types. 
For $i\le k$, we define $\theta_i=1+d+\cdots+d^{i-1}=\frac{d^i-1}{d-1}$ and $v_i=1+d(1-\epsilon)+d^2(1-2\epsilon)+\cdots+d^{i-1}(1-(i-1)\epsilon)=\frac{d^i-1}{d-1}-\epsilon(i-1)\cdot\frac{d^i}{d-1}+\epsilon d\cdot \frac{d^{i-1}-1}{(d-1)^2}$.
Define weight $w_i$ as
\begin{equation*}
    w_i=\prod\limits_{2\le j\le i}\frac{\theta_{j-1}\cdot d^{2-j}}{1+d+\theta_{j-1}\cdot d^{1-j}}.
\end{equation*}
Next we set $f_i=w_i/w$ where $w=\sum_{1\le j\le k}w_k$. It can be shown that $w_i=d^{1-i}(1+O(1/d))$.
Given this probability distribution with $k$ different types, it is easy to check that the optimal separation function is exactly the piecewise linear function that connects these points consecutively. 
Formally, the optimal function $\ell(\cdot)$ is
\begin{equation*}
    \ell(\theta)=\begin{cases}
       \frac{v_1}{\theta_1} \cdot \theta,& \theta\in [0, \theta_1];\\
      v_{i-1}+\frac{v_i-v_{i-1}}{\theta_i-\theta_{i-1}}\cdot (\theta-\theta_{i-1}), & \theta\in (\theta_{i-1},\theta_i];\\
      v_k, & \theta\in (\theta_k,\infty),
    \end{cases}
\end{equation*}
where $i=2,\ldots,k$.
Then we measure the extracted revenue and the time loss for each type. 
The revenue from type $(\theta_i,v_i), i<k$ is
\begin{eqnarray*}
 rev_i&=&(v_i-\theta_i\cdot (v_{i+1}-v_i)/(\theta_{i+1}-\theta_i))\cdot f_i\\
 &=&(v_i-\theta_i\cdot (1-i\epsilon))\cdot f_i\\
 &=&\epsilon\cdot \left(\frac{d^i-1}{d-1}+\frac{d^i-d}{(d-1)^2}\right)\cdot \frac{w_i}{w}\\
 &=&\epsilon/w \cdot(1+O(1/d)).
\end{eqnarray*}
The time loss from the same type is 
\begin{align*}
    loss_i &=\theta_i\cdot \frac{ v_{i+1}-v_i}{\theta_{i+1}-\theta_i}\cdot f_i\\
    &=\theta_i\cdot (1-i\epsilon) \cdot w_i/w\\
    &=d^{i-1}(1+O(1/d))(1-i\epsilon)\cdot d^{1-i}(1+O(1/d))/w\\
    &=(1+o(1))/w.
\end{align*}
The revenue collected from the type $(\theta_k,v_k)$ is 
\begin{eqnarray*}
rev_k&=&v_k f_k\\
&=&\frac{d^i-1}{d-1}(1+O(\epsilon k))\cdot d^{1-i}(1+O(1/d))/w\\
&=&(1+O(\epsilon n)+O(1/d))/w.
\end{eqnarray*}
There is no time loss for type $k$. To sum up, as long as $d\gg k$ and $\epsilon=o(1)$, we have $rev=\sum_{i\le k}rev_i=(1+o(1))/w$ and $loss=\sum_{i\le k-1}loss_i=(1+o(1))(k-1)/w$.
Then $loss/rev=k-1+o(1)$.
\end{proof}

\begin{figure}[ht]
    \centering 
\cmt{    \begin{subfigure}[t]{4cm}
        \centering
        
\tikzset{every picture/.style={line width=0.75pt}} 
\resizebox{3.5cm}{4cm}{
\begin{tikzpicture}[x=0.75pt,y=0.75pt,yscale=-1,xscale=1]

\draw  (83.91,250.35) -- (236.8,250.35)(99.97,93.4) -- (99.97,272.2) (229.8,245.35) -- (236.8,250.35) -- (229.8,255.35) (94.97,100.4) -- (99.97,93.4) -- (104.97,100.4)  ;
\draw   (100,130) -- (180,130) -- (180,250) -- (100,250) -- cycle ;
\begin{scope}
            \tikzset{
    render blur shadow/.prefix code={
      \colorlet{black}{rgb, 255:red, 255; green, 255; blue, 255 }  
    }
  }

            \draw  [color={rgb, 255:red, 0; green, 0; blue, 0 }  ,draw opacity=1 ][fill={rgb, 255:red, 235; green, 235; blue, 235  }  ,fill opacity=1 ][line width=0.75] (180,210) -- (180,250) -- (100,170) -- (100,130) -- cycle ;
\end{scope}
\draw  [dash pattern={on 4.5pt off 4.5pt}]  (140,210) -- (140,250) ;
\draw    (100,240) .. controls (120.2,212.8) and (135.8,182) .. (180,180) ;
\draw  [color={rgb, 255:red, 228; green, 22; blue, 22 }  ,draw opacity=1 ]  (179.64,220) -- (139.23,210) -- (98.94,170) ;
\draw  [dash pattern={on 4.5pt off 4.5pt}]  (130.89,201.37) -- (131,249.64) ;

\draw (86.4,101.6) node [anchor=north west][inner sep=0.75pt]    {$v$};
\draw (220.6,255.4) node [anchor=north west][inner sep=0.75pt]    {$\theta $};
\draw (88,252.4) node [anchor=north west][inner sep=0.75pt]  [font=\footnotesize]  {$0$};
\draw (138,252.4) node [anchor=north west][inner sep=0.75pt]  [font=\footnotesize]  {$1$};
\draw (88,207.4) node [anchor=north west][inner sep=0.75pt]  [font=\footnotesize]  {$1$};
\draw (178,252.4) node [anchor=north west][inner sep=0.75pt]  [font=\footnotesize]  {$2$};
\draw (88,167.4) node [anchor=north west][inner sep=0.75pt]  [font=\footnotesize]  {$2$};
\draw (88,127.4) node [anchor=north west][inner sep=0.75pt]  [font=\footnotesize]  {$3$};
\draw (126,251.6) node [anchor=north west][inner sep=0.75pt]  [font=\footnotesize]  {$\hat{\theta }$};
\draw (181.6,171.6) node [anchor=north west][inner sep=0.75pt]  [font=\footnotesize]  {$\ell ( \theta )$};

\end{tikzpicture}
}
        \caption{negative correlation}
        \label{fig:2.1}
    \end{subfigure}
    \hfill

        \centering

}
\tikzset{every picture/.style={line width=0.75pt}} 
\resizebox{3.5cm}{4cm}{
\begin{tikzpicture}[x=0.75pt,y=0.75pt,yscale=-1,xscale=1]

\draw  (83.91,250.35) -- (236.8,250.35)(99.97,103.4) -- (99.97,272.2) (229.8,245.35) -- (236.8,250.35) -- (229.8,255.35) (94.97,110.4) -- (99.97,103.4) -- (104.97,110.4)  ;
\draw   (100,130) -- (180,130) -- (180,250) -- (100,250) -- cycle ;
\draw  [fill={rgb, 255:red, 228; green, 228; blue, 228 }  ,fill opacity=1 ] (180,130) -- (180,170) -- (100,250) -- (100,210) -- cycle ;
\draw    (180.27,197.64) -- (127.18,198.18) -- (100.2,223.2) ;
\draw  [dash pattern={on 4.5pt off 4.5pt}]  (99.91,198) -- (129.36,198) ;
\draw  [dash pattern={on 4.5pt off 4.5pt}]  (127.18,198.18) -- (127,250.18) ;

\draw (86.4,101.6) node [anchor=north west][inner sep=0.75pt]    {$v$};
\draw (220.6,255.4) node [anchor=north west][inner sep=0.75pt]    {$\theta $};
\draw (88,252.4) node [anchor=north west][inner sep=0.75pt]  [font=\footnotesize]  {$0$};
\draw (138,252.4) node [anchor=north west][inner sep=0.75pt]  [font=\footnotesize]  {$1$};
\draw (88,207.4) node [anchor=north west][inner sep=0.75pt]  [font=\footnotesize]  {$1$};
\draw (178,252.4) node [anchor=north west][inner sep=0.75pt]  [font=\footnotesize]  {$2$};
\draw (88,167.4) node [anchor=north west][inner sep=0.75pt]  [font=\footnotesize]  {$2$};
\draw (88,127.4) node [anchor=north west][inner sep=0.75pt]  [font=\footnotesize]  {$3$};
\draw (182.5,181.4) node [anchor=north west][inner sep=0.75pt]  [font=\footnotesize]  {$\ell ( \theta )$};
\draw (81,192.4) node [anchor=north west][inner sep=0.75pt]  [font=\tiny]  {$4/3$};
\draw (81,222.4) node [anchor=north west][inner sep=0.75pt]  [font=\tiny]  {$2/3$};
\draw (120,256.4) node [anchor=north west][inner sep=0.75pt]  [font=\tiny]  {$2/3$};

\end{tikzpicture}

}
    \caption{Buyers' type distribution}
    \label{fig:2}
\end{figure}
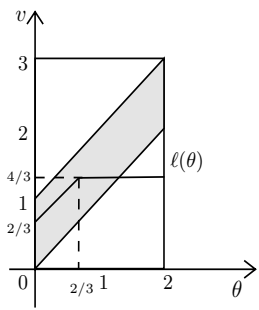

\section{Revenue and Correlations}
In this section, we show how to use calculus method to find optimal pricing scheme in an example and explain the limitations of this method. At last we discuss the implications from the example. 

We give a closed form of the optimal solution to a challenging case where the value and the CPUT are positively correlated shown in Figure~\ref{fig:2}. To be specific,
\begin{eqnarray*} 
    f(\theta,v)=\begin{cases}
        \frac{1}{2},  &   ~~~0 \le v - \theta \le 1\text{ and }0\le \theta \le 2;\\
        0,  & ~~~\text{otherwise}.
    \end{cases}
\end{eqnarray*}

\begin{theorem}
Consider the distribution defined by $f$ as above, the optimal separation line is  
\begin{eqnarray*} 
    \ell_p(\theta)=\begin{cases}
        2/3 + \theta,  &   ~~~0 \le \theta \le 2/3;\\
        4/3, & ~~~2/3 < \theta;\\
        0,  & ~~~\text{otherwise},
    \end{cases}
\end{eqnarray*}
and the optimal pricing scheme is 
\begin{eqnarray*} 
    p(t)=\begin{cases}
        4/3,  &   ~~~t\in [0,1];\\
        2/3, & ~~~t>1.
    \end{cases}
\end{eqnarray*}
\end{theorem}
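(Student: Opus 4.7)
The plan is to apply the calculus of variations to the formulation in Equation~\eqref{new_main}. For this distribution, the marginal density of $\theta$ is $f_\theta(\theta) = 1/2$ on $[0,2]$, while the conditional CDF is $F_{v|\theta}[\ell|\theta] = \ell - \theta$ whenever $\ell\in[\theta,\theta+1]$, so the objective becomes
\[
\tfrac{1}{2}\int_0^2 \bigl(\ell(\theta) - \theta \ell'(\theta)\bigr)\bigl(1 + \theta - \ell(\theta)\bigr)\, d\theta
\]
subject to $\ell$ concave and non-decreasing, with the understanding that if $\ell(\theta) < \theta$ every buyer at $\theta$ purchases (so the factor $1 + \theta - \ell(\theta)$ is replaced by $1$) and if $\ell(\theta) > \theta+1$ nobody purchases.

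Writing the Lagrangian $F(\theta, L, L') = (L - \theta L')(1 + \theta - L)/2$ and solving the Euler--Lagrange equation $\partial_L F = (d/d\theta)\partial_{L'}F$, I would obtain the relation $3\ell = 2 + 3\theta$, i.e., $\ell(\theta) = 2/3 + \theta$. This is the unique smooth interior critical point; it is concave (linear) and lies in $[\theta,\theta+1]$ on all of $[0,2]$. However, a direct calculation shows it earns only revenue $2/9$, so it is not globally optimal. The mismatch arises because $F$ is not jointly concave in $(L, L')$, so E--L certifies only a local critical point and the true optimum must have an active constraint, specifically a piece where $\ell' = 0$ (the concavity boundary).

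Guided by this, I would optimize over the two-parameter family consisting of a linear piece $\ell(\theta) = c + \theta$ on $[0,\theta^*]$ (slope $1$ dictated by the E--L equation) joined to a constant $\ell(\theta) = c + \theta^*$ on $(\theta^*, 2]$. Computing the revenue $R(c,\theta^*)$ in closed form, splitting the constant piece at $\theta = c + \theta^*$ (above which every buyer in the support purchases, so the factor $1 - F_{v|\theta}$ becomes $1$), and setting the partial derivatives to zero leads to the unique nondegenerate solution $c = \theta^* = 2/3$, giving revenue $22/27$. The pricing function is then recovered from $p(\ell'(\theta)) = \ell(\theta) - \theta\ell'(\theta)$: on the sloped piece $p(1) = 2/3$, on the flat piece $p(0) = 4/3$, and by monotonicity of $p$ one extends this to the claimed piecewise definition.

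The main obstacle is promoting this parametric optimum to a genuine global optimum over all concave non-decreasing $\ell$. I would close this gap by showing any optimal $\ell$ must have the ``linear-then-constant'' shape: on any sub-interval where the concavity constraint is inactive and $\ell'(\theta) > 0$, the E--L equation forces slope exactly $1$ with intercept $2/3$, while on any sub-interval where $\ell$ is constant there is nothing more to optimize. A short perturbation argument rules out intervals of strict concavity with $\ell'$ strictly decreasing and positive (by comparing against a local replacement that straightens the piece), yielding precisely the two-piece structure, after which the parametric optimization above pins down the unique optimum.
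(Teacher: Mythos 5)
Your proposal follows essentially the same route as the paper: reduce the objective to $\tfrac12\int_0^2(\ell-\theta\ell')(1+\theta-\ell)\,\mathrm{d}\theta$, obtain the stationarity relation $3\ell=3\theta+2$ from the first variation (the paper's $\int_0^b(3\theta+2-3\ell_p)g\,\mathrm{d}\theta=0$ is exactly your Euler--Lagrange condition), argue that the optimum must therefore be a slope-one linear piece followed by a constant piece, and finish with a finite-dimensional optimization giving $c=\theta^*=2/3$ and revenue $22/27$. Your extra checks --- that the unconstrained stationary curve alone earns only $2/9$, so the concavity/monotonicity constraints must bind, and the explicit closed-form optimization over $(c,\theta^*)$ --- are correct and make the step the paper dismisses as ``simple calculations'' more transparent, while the remaining informality about global optimality over all concave non-decreasing $\ell$ is shared with (and acknowledged more honestly than) the paper's own proof.
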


\begin{proof}
According to the distribution, we have that $F_{v|\theta}[\ell_p(\theta)|\theta]=\theta+1-\ell_p(\theta)$ for $\ell_p(\theta)\in [\theta,\theta+1]$.
We can simplify the objective function in Equation~\eqref{new_main} as
\begin{eqnarray*}
    \frac12\int_{\theta=0}^2 (\ell_p(\theta)-\theta \ell_p'(\theta))[\theta+1-\ell_p(\theta)]^+ \mathrm{d}\theta,
\end{eqnarray*}
where we use $[\cdot]^+$ to denote the larger one between the input and zero. 
W.L.O.G., we claim that $\ell_p(0)$ is non-negative. Assuming that $\ell_p(0)< 0$, we can draw a shooting line $\ell_2(\cdot)$ from origin point $(0,0)$ touching the separation function $\ell_p(\cdot)$ with the touching point $(\hat{\theta},\ell_p(\hat{\theta}))$. 
If we use $\ell_p$ as the separation function, the payment from the buyer with type $\theta<\hat{\theta}$ will be negative. Therefore, the performance of the separation function 
$\hat{\ell}_p$ by replacing $\ell_p$ with $\ell_2$ over the interval $(0,\hat{\theta})$. 

Since $\ell_p(\theta)$ is concave, we define $b$ such that
$[0,b]=[0,2]\bigcap\{\theta|\ell_p(\theta)\ge \theta\}.$
In other words, $[0,b]$ is the interval where the separation function is above the line $v=\theta$.

At present, we focus on the payment from the buyer with $\theta\in [0,b]$ which depends on the design of the separation function $\ell_p(\theta),\theta\in [0,b]$. 
We also relax the term $[\theta+1-\ell_p(\theta)]^+$ to be $\theta+1-\ell_p(\theta)$ since it cannot be negative in the optimal solution. 

To find the optimal solution $\ell_p$, we use the method from the calculus of  variations. 
We define the Lagrangian $L$ as follows:
\begin{align*}
    L(\ell_p+\alpha g)=\int_{\theta=0}^b &\left[\ell_p(\theta)+\alpha g(\theta)-\theta \ell_p'(\theta)-\theta\alpha g'(\theta)\right] \cdot\left[\theta+1-\ell_p(\theta) - \alpha g(\theta)\right]\mathrm{d}\theta.
\end{align*}
    
When $\ell_p$ is the optimal solution, its restriction on $[0,b]$ is also the optimal solution with fixed value between $\ell_p(0)$ and $\ell_p(b)$. Then as long as $\ell_p+\alpha g$ satisfies convex and non-decreasing constraint and $g(0)=g(b)=0$, 
we have that $\frac{\partial L}{\partial \alpha}|_{\alpha=0}$ is zero.
\begin{align*} 
    \frac{\partial L}{\partial \alpha} =&\int_{0}^b \bigg[(g(\theta)- \theta g'(\theta))(- \ell_p(\theta)+\theta +1 )  - [\ell_p(\theta) - \theta \ell_p'(\theta)]g(\theta)  \bigg] \mathrm{d}\theta\\
    =&\int_{0}^b g(\theta)\big(\theta+1-2\ell_p(\theta)+\theta \ell_p'(\theta)\big)\mathrm{d}\theta - \int_{0}^b \theta(\theta +1 -\ell_p(\theta))g'(\theta)\mathrm{d}\theta\\
    =&\int_{0}^b g(\theta)\left(\theta+1-2\ell_p(\theta)+\theta \ell_p'(\theta)\right)\mathrm{d}\theta - \theta(\theta+1-\ell_p(\theta))g(\theta) \vert_0^b +\int_{0}^b g(\theta)\left( 2\theta+1-\ell_p(\theta)-\theta\ell_p'(\theta)\right) \mathrm{d}\theta\\
    =&\int_{0}^b (3\theta+2-3\ell_p(\theta))g(\theta)\mathrm{d}\theta.
\end{align*}
For any interval where $3\theta+2-3\ell_p(\theta)>0$, any solution $\ell_p(\theta)+\alpha g(\theta)$ is feasible and $\alpha>0$, we have $g(\theta)=0$. With the fact that $g(0) = g(\theta)=0$, the function $\theta(\theta+1-\ell_p(\theta))g(\theta) \vert_0^b$ can be dropped from the equation. It implies that in such an interval, $\ell_p(\theta)$ has already achieves the highest possible value. 
Similarly, in any interval where $3\theta+2-3\ell_p(\theta)\ge 0$ , $\ell_p(\theta)$ achieves the lowest possible value. 

Assume the intersection between $\ell_p$ and straight line $v=\theta+2/3$ is at point $(\tilde{\theta},\tilde{\theta}+2/3)$. According to the argument, we have $\ell_p$ is a straight line for $\theta\in [0,\tilde{\theta}]$ and
is a horizontal line for $\theta\in [\tilde{\theta},b]$.

By simple calculations, the optimal intersection point is $(\tilde{\theta}, \ell_p (\tilde{\theta}))=(2/3,3/4)$. The optimal separation function is given as follows,
\begin{eqnarray*} 
    \ell_p(\theta)=\begin{cases}
        2/3 + \theta,  &   ~~~0 \le \theta \le 2/3;\\
        4/3, & ~~~2/3 < \theta;\\
        0,  & ~~~\text{otherwise}.
    \end{cases}
\end{eqnarray*}
The pricing scheme can be computed easily and the calculation is omitted.
\end{proof}

As shown in the proof, we can see that the calculation of the closed-form of the separation line heavily depends on the distribution assumption, which could be much more complicated if the distribution is not very well.

When the value $v$ and the CPUT $\theta$ are positively correlated, the pricing scheme may help, as illustrated in Theorem~\ref{thm:3}. In the positive correlation example, the optimal revenue using posted price is $25/32$, and the optimal revenue using our pricing scheme is $22/27$, which increases about $\frac{22/27}{25/32}-1\approx 4.6\%$. 
The time loss is $1/27$, which is slightly smaller than the increase of the revenue.

\section{Conclusion}
In this paper, we introduce the pricing scheme problem with time-sensitive buyers. We then propose a polynomial algorithm to find the optimal scheme in the discrete distribution setting. We consider the revenue change and measure wasted time in our pricing scheme. There are several future directions based on this work. In this paper, we study the revenue maximization problem. It could be interesting to know some results about the welfare maximization. Now we focus on the offline model in the sense that the customers are known in advance. We are also interested this problem in the online model. Furthermore, we assume that the cost of time is linear with respect to the time, one interesting question is to explore general cost functions which could be more practical.

\bibliographystyle{IEEEtran}  
\bibliography{refs}

\end{document}